\newcommand{\tr}{^\top}			
\newcommand{\bs}[1]{\boldsymbol{#1}}	
\long\def\comment#1{}
\newfont{\bbb}{msbm10 scaled 700}
\newcommand{\uv}{{\bf u}}
\newcommand{\xv}{{\bf x}}
\newcommand{\yv}{{\bf y}}
\newcommand{\Dm}{{\bf D}}
\newcommand{\Id}{{\bf I}}
\newcommand{\Jm}{{\bf J}}
\newcommand{\Mm}{{\bf M}}
\newcommand{\Qm}{{\bf Q}}
\newcommand{\Rm}{{\bf R}}
\newcommand{\Sm}{{\bf S}}
\newcommand{\Um}{{\bf U}}
\newcommand{\Wm}{{\bf W}}
\newcommand{\Ec}{{\cal E}}
\newcommand{\Gc}{{\cal G}}
\newcommand{\Hc}{{\cal H}}
\newcommand{\Ic}{{\cal I}}
\newcommand{\Kc}{{\cal K}}
\newcommand{\Lc}{{\cal L}}
\newcommand{\Mc}{{\cal M}}
\newcommand{\Nc}{{\cal N}}
\newcommand{\Pc}{{\cal P}}
\newcommand{\Sc}{{\cal S}}
\newcommand{\Vc}{{\cal V}}
\newcommand{\Scc}{{\Sc^c}}
\newcommand{\Lscr}{ {\mathscr L}}
\newcommand{\Mscr}{ {\mathscr M}}
\newcommand{\Hscr}{ {\mathscr H}}
\newcommand{\Ex}{{\mathbb{E}}}
\newtheorem{theorem}{Theorem}
\newtheorem{prop}{Proposition}
\theoremstyle{remark}
\newtheorem{rem}{Remark}
\crefname{prop}{Proposition}{Propositions}
\newacronym{fir}{FIR}{finite-extent impulse response}
\newacronym{iir}{IIR}{infinite-extent impulse response}
\newacronym{psnr}{PSNR}{peak-signal-to-noise ratio}
\newacronym{snr}{SNR}{signal-to-noise ratio}
\newacronym{awgn}{AWGN}{additive white Gaussian noise}
\newacronym{nrmse}{NRMSE}{normalized-root-mean-square error}
\newacronym{cnn}{CNN}{convolutional neural network}
\newacronym{iid}{iid}{independent and identically distributed}
\newacronym{dft}{DFT}{discrete Fourier transform}
\newacronym{idft}{IDFT}{inverse discrete Fourier transform}
\newacronym{gft}{GFT}{graph Fourier transform}
\newacronym{igft}{IGFT}{inverse graph Fourier transform}
\newacronym{fft}{FFT}{fast Fourier transform}
\newacronym{fpga}{FPGA}{field-programmable gate array}
\newacronym{lp}{LP}{linear programming}
\newacronym{cpa}{CPA}{Chebyshev polynomial approximation}
\newacronym{ls}{LS}{least square}
\newacronym{arma}{ARMA}{autoregressive moving average}
\newacronym{wls}{WLS}{weighted least-square}
\newacronym{socp}{SOCP}{second-order cone programming}
\newacronym{sse}{SSE}{sum-of-square-error}
\newacronym{gsp}{GSP}{graph signal processing}
\def\BibTeX{{\rm B\kern-.05em{\sc i\kern-.025em b}\kern-.08em
    T\kern-.1667em\lower.7ex\hbox{E}\kern-.125emX}}
\title{Graph-based Signal Sampling with Adaptive Subspace Reconstruction for Spatially-irregular Sensor Data}
\name{Darukeesan Pakiyarajah, Eduardo Pavez, Antonio Ortega}
\address{University of Southern California, Los Angeles, CA, USA}
\begin{document}

\ninept
\maketitle
\begin{abstract}
Choosing an appropriate frequency definition and norm is critical in graph signal sampling and reconstruction. Most previous works define frequencies based on the spectral properties of the graph and use the same frequency definition and $\ell_2$-norm for optimization for all sampling sets. Our previous work demonstrated that using a sampling set-adaptive norm and frequency definition can address challenges in classical bandlimited approximation, particularly with model mismatches and irregularly distributed data. 
In this work, we propose a method for selecting sampling sets tailored to the sampling set adaptive GFT-based interpolation. When the graph models the inverse covariance of the data, we show that this adaptive GFT enables localizing the bandlimited model mismatch error to high frequencies, and the spectral folding property allows us to track this error in reconstruction. Based on this, we propose a sampling set selection algorithm to minimize the worst-case bandlimited model mismatch error. 
We consider partitioning the sensors in a sensor network sampling a continuous spatial process as an application. 
Our experiments show that sampling and reconstruction using sampling set adaptive GFT significantly outperform methods that used fixed GFTs and bandwidth-based criterion.
\end{abstract}
\begin{keywords}
graph sampling, sampling set adaptive GFT,  spectral folding, sensor network, spatial continuous process 
\end{keywords}
\vspace{-1ex}
\section{INTRODUCTION}
\vspace{-1ex}
\label{sec:intro}
\glsresetall

Graph signal sampling and interpolation involve observing a signal at $s$ of the $N$ graph nodes and using graph models to produce an interpolated signal at the remaining $N-s$ nodes~\cite{tanaka20}. 
Since the sampled signal lies in an $s$-dimensional space,  when using linear interpolators, the reconstructed signal belongs to an $s$-dimensional 
subspace of the original $N$-dimensional space. 
Existing works typically fix this subspace \textit{a priori} (e.g., the span of the first $s$ basis vectors of the GFT) for all sampling sets~\cite{anis16,Puy18,chen15,Tanaka19,Chamon18,Narang13,Chen15c,Chen16,Xie17}. 
This raises the question of how to optimize sampling strategies when a suitable subspace prior is not available for the data. 
While in standard graph models signals that are smooth with respect to the graph (low-frequency signals) are assumed to be more likely, this does not mean that a subspace approximation (assuming only the first $s$ frequencies are non-zero) is desirable. 
In particular, in 
our previous work~\cite{Paki24}, we showed that even if the original signal is smooth, the signal obtained by introducing zeros in unobserved nodes has potentially large energy in arbitrary frequencies, so a simple low-pass filter reconstruction produces poor results.  
As an alternative, we proposed designing a sampling set-dependent GFTs and corresponding reconstruction subspaces for which low pass filtering yields robust reconstruction, outperforming standard approaches~\cite{anis16,Puy18,chen15,Tanaka19,Chamon18,Narang13,Chen15c,Chen16,Xie17,Yang21}. 
While in \cite{Paki24}, we assumed the sampling set was given, here we address sampling set selection. 

Our work is based on the irregularity-aware graph Fourier transform ~\cite{girault18}, which allows for selecting an inner product as an additional degree of freedom to define different frequency representations of signals. In~\cite{girault20}, the authors proposed using an alternate norm (independent of the sampling set) for sampling and reconstruction that can take into account the irregularities in a sensor network capturing a spatially regular signal. In their experiments, they observed that different norms and corresponding frequency definitions could yield different reconstruction performances for different sampling sets~\cite{girault20}. In our previous work \cite{Paki24}, we introduced a sampling set-adaptive norm that can account for irregularities: 1)  the relative contribution of sampled nodes for interpolation,  and 2) the relative reliability of interpolated nodes. We showed that this interpolation can yield a robust reconstruction compared to the other methods that use standard sampling set-independent norm for optimization, particularly in the presence of model mismatches.

In this work, we assume that the graph Laplacian approximates the inverse covariance of the data and that the signals have a decaying power spectrum. We use the interpolator from \cite{Paki24}, which is based on a reconstruction subspace that is a function of the sampling set and is not fixed a priori. We show that this interpolation error depends only on the higher frequency content of the original signal as a function of the sampling set. Based on this, we compute the covariance of energy in the high frequencies and formulate the sampling set selection problem as an E-optimal design~\cite{anis16}. Furthermore, we show that the optimal linear Gauss Markov random field (GMRF) estimator can be interpreted as a subspace reconstruction with the sampling-set-adaptive frequency definition. From this, we justify how our reconstruction method can mitigate the effects of model mismatches for non-GMRF data.

\begin{figure} 
        \centering
        \subfloat[\label{fig:training}]{%
       \includegraphics[width = 0.5 \textwidth, trim={0cm 0.65cm 3cm 0.5cm}, clip ]{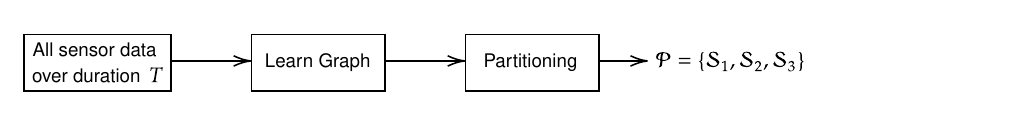}}
        \hfill \vskip -0.2ex
      \subfloat[\label{fig:implementation}]{%
            \includegraphics[width = 0.5 \textwidth, trim={0cm 0.5cm 3cm 0.1cm}, clip ]{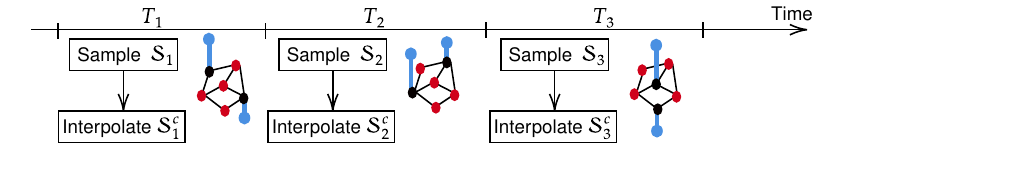}}
        \vskip -2ex
        \caption{(a) Graph learning and partitioning state; (b) Different subsets of sensors are  sampled at each time window. Blue and red vertices correspond to sampled and interpolated sensors, respectively.}        
        \vskip -4.5ex       
\end{figure}

As a use case, we address the problem of finding multiple sampling subsets for sensor networks sampling a spatial process~\cite{Chakraborty23,Holm21,li2024finding}. First, we collect data from all sensors over a duration of $T$ and learn a graph model. In this scenario, the learned graph may not capture the statistics of the underlying data (which was produced by a continuous process independent of the sensor locations) and provide only an approximation of the model. We then identify multiple disjoint sampling subsets as illustrated in~\cref{fig:training}. In practice, these subsets alternatively sample during different time frames and the readings of non-sampled sensors are interpolated, as shown in~\cref{fig:implementation}. This minimizes sensor active time and extends the sensor lifetime. We compare the interpolation performance to the state-of-the-art bandlimited reconstruction method~\cite{Chakraborty23}, showing that our approach outperforms the state-of-the-art by approximately $4$ dB when interpolating signals varying rapidly in space with fewer samples.

\vspace{-1ex}
\section{Preliminaries}
\label{sec:review}

\subsection{Graph signals and $(\Mm, \Qm)$-GFTs}

Let $\Gc=(\Vc,\Ec)$ denote a  undirected graph, where $\mathcal{V}=\{1,2,\cdots,N\}$ is the vertex set, and $\Ec \subset \Vc \times \Vc$ is the edge set. A graph signal is a function $x: \Vc \to \mathbb{R}$, with vector representation  given by $\xv = [x_1, x_2, \hdots, x_N] \tr$, where $x_i$ is the signal value at vertex $i \in \Vc$. The graphs are represented by a positive semi-definite variation operator $\Mm=(m_{ij})$, where $m_{ij}$ is the weight of the edge between vertex $i$ and $j$, and $m_{ij}=0$, whenever $(i,j)\notin \Ec$. The variation operator measures signal smoothness in a Hilbert space where the inner product is defined as $\langle \xv, \yv \rangle_\Qm =  \yv\tr\Qm \xv$ with induced $\Qm$-norm $\|\xv\|_\Qm^2 = \langle \xv, \xv \rangle_\Qm$ where $\Qm \succ 0$. The $(\Mm, \Qm)$-GFT \cite{girault18} is defined as the generalized eigenvectors of $\Mm$ which solves $\Mm \uv_k = \lambda_k \Qm \uv_k$,
where $0 \leq \lambda_1 \leq \hdots \leq \lambda_N$ are the graph frequencies and the columns of $\Um = [\uv_1, \hdots \uv_N]$ forms a $\Qm$-orthonormal (i.e., $\Um\tr\Qm\Um=\Id$) basis for $\mathbb{R}^N$. This corresponds to the decomposition $\Qm^{-1}\Mm = \Um\bs\Lambda\Um\tr\Qm$, where $\bs\Lambda=\text{diag}(\lambda_1,\hdots,\lambda_N)$. The $(\Mm, \Qm)$-GFT representation of a signal $\xv$ is defined as $\hat\xv=\Um\tr\Qm\xv$ and the corresponding inverse transform is given by $\xv = \Um\hat\xv$.
\vspace{-1.5ex}
\subsection{Bandlimited graph signal interpolation}
Given a signal $\xv$  and a sampling set $\Sc\subseteq\Vc$, the bandlimited reconstruction aims to interpolate $\xv$  from $\xv_\Sc$ in the form $\xv = \sum_{k=1}^K\hat x_k\uv_k = \Um_{\Vc\Kc}\hat\xv_\Kc$, where $\Kc=\{1,\hdots,K\}$ and $K\leq |\Sc|$.
The sample consistent bandlimited interpolation is obtained as~\cite{girault20}
\begin{align}
    \tilde{\xv} & = \Um_{\Vc\Kc}\left( \text{arg }\underset{\hat\xv_\Kc}{\text{min }} \| \xv_\Sc - \Um_{\Sc\Kc}\hat\xv_\Kc\|_{\Qm_\Sc}^2\right) \notag \\
    & = \Um_{\Vc\Kc}\left(\Um_{\Sc\Kc}\tr\Qm_\Sc\Um_{\Sc\Kc}\right)^{-1}\Um_{\Sc\Kc}\tr\Qm_\Sc\xv_\Sc. \label{eq:mqrec}
\end{align}
The sampling set selection methods aim to find the optimal sampling set that minimizes the effect of model mismatch and/or noise~\cite{tanaka20,girault20}.

In our previous work~\cite{Paki24}, we introduced a sampling set-adaptive interpolation with a specific choice of $\Qm$, which results in graph frequencies exhibiting a property known as spectral folding (SF)\cite{pavez22}. We say that $(\Mm, \: \Qm)$-GFT has the spectral folding property if for any $(\uv, \lambda)$ generalized eigenpair of the $(\Mm, \Qm)$-GFT, $(\Jm\uv, \: (2 - \lambda))$ is also a generalized eigenpair of the $(\Mm, \Qm)$-GFT (i.e., $\Mm\uv = \lambda \Qm \uv \iff \Mm\Jm\uv = (2 - \lambda)\Qm\Jm\uv$). 
Here, $\Sc=\{1,\hdots,|\Sc|\}$, $\Sc^c =\Vc \setminus \Sc$ is a partition, and $\Jm$ is a diagonal matrix with diagonal entries given by $\Jm_{i,i}=1 \text{\normalfont{ if } } i \in \Sc$ and $\Jm_{i,i}=-1 \text{\normalfont{ if } } i \in \Scc$.
 Furthermore, the $(\Mm, \Qm(\Sc))$-GFT has the spectral folding property \emph{iff} the inner product $\Qm(\Sc)$ is chosen as\footnote{We use the notation $\Qm(\Sc)$ to emphasize that $\Qm$ is a function of $\Sc$.}
    \begin{equation}\label{eq:def_Q}
        \Qm(\Sc) = \begin{bmatrix}
            \Mm_{\Sc\Sc} & \mathbf{0} \\
            \mathbf{0} & \Mm_{\Sc^c\Sc^c}
        \end{bmatrix}.
    \end{equation}
Note that this $(\Mm, \Qm(\Sc))$-GFT is a function of $\Sc$, meaning the subspace for signal interpolation depends on the sampling set. Furthermore, leveraging the spectral folding property and letting $K=|\Sc|$, the reconstruction formulae in~\eqref{eq:mqrec} simplify to~\cite{Paki24}
\begin{align}
    \tilde{\xv} & = 2\Um_{\Vc\Kc}\Um_{\Sc\Kc}\tr\Qm_\Sc(\Sc)\xv_\Sc. \label{eq:sfmqrec}
\end{align}
Note that the $\Qm(\Sc)$-norm of the signal takes the relative contribution of the sampled nodes for the reconstruction and relative reliabilities of interpolation in non-sampled nodes into account~\cite{Paki24}.

\section{Sampling set selection}
\label{sec:repsam}


\subsection{Sampling set selection}
\label{sec:sampling}
Let $\xv$ be a random signal drawn from a distribution with covariance $\bs\Sigma$, with $\Mm=\bs \Sigma^{-1}$, and let $\Qm(\Sc)$ be selected according to \eqref{eq:def_Q}. 
We formulate the sampling set selection problem as choosing the subset that minimizes the worst-case bandlimited model mismatch error (E-optimal) of the signal with respect to the $(\Mm, \Qm(\Sc))$-GFT interpolator with SF introduced in \cite{Paki24}.

We first define three frequency domain subspaces associated with the SF $(\Mm,\Qm(\Sc))$-GFT. 
We define $\Lscr = \text{span}\{ \uv_i : 0 \leq \lambda_i <1 \}$, $\Mscr = \text{span}\{ \uv_i : \lambda_i = 1 \}$,  and $\Hscr =\text{span}\{ \uv_i : 1 <\lambda_i \leq 2 \}$, for low, middle and high frequency signals, respectively. The index sets associated to each sub-space are $\Lc = \{ i : 0 \leq \lambda_i<1 \}$, $\Mc = \{ i : \lambda_i=1 \}$, and $\Hc = \{ i: 1<\lambda_i\leq 2 \}$. 
Without loss of generality, we assume that $|\Sc| < |\Scc|$ and that $\Mm_{\Sc\Scc}$ has full row rank, i.e., $\text{rank}(\Mm_{\Sc\Scc}) = |\Sc|$. It can be proved that $\text{dim}(\Lscr)=\text{dim}(\Hscr) = |\Sc|$ and $\text{dim}(\Mscr)=|\Scc| - |\Sc|$~\cite{pavez22}. Note that these subspaces and frequencies are functions of $\Sc$.

Any signal $\xv$ can be decomposed as $\xv = \xv_l(\Sc) + \Delta\xv_m(\Sc) + \Delta\xv_h(\Sc)$, where $\xv_l(\Sc) \in \Lscr$ is the bandlimited component, $\Delta\xv_m(\Sc) \in \Mscr$, and $\Delta\xv_h(\Sc) \in \Hscr$. Here, $\Delta\xv_m(\Sc) + \Delta\xv_h(\Sc)$ represents the bandlimited model mismatch error of the signal. Furthermore, let $\hat\xv(\Sc) = \begin{bmatrix}\hat\xv_l(\Sc)\tr & \Delta\hat\xv_m(\Sc)\tr & \Delta\hat\xv_h(\Sc)\tr \end{bmatrix}\tr$, where $\Delta\hat\xv_l(\Sc) = \Um_{\Vc\Lc}^\top \Qm(\Sc) \xv$, $\Delta\hat\xv_m(\Sc) = \Um_{\Vc\Mc}^\top \Qm(\Sc) \xv$, and $\Delta\hat\xv_h(\Sc) = \Um_{\Vc\Hc}^\top \Qm(\Sc) \xv$. 

\begin{theorem}
    If we use~\eqref{eq:sfmqrec} to interpolate $\xv_\Scc$ from $\xv_\Sc$, then the reconstruction error in $\Qm(\Sc)$-norm is given by $\|\tilde\xv-\xv\|_{\Qm(\Sc)}^2 = 2\|\Delta\hat\xv_h(\Sc)\|_2^2+\|\Delta\hat\xv_m(\Sc)\|_2^2.$
\end{theorem}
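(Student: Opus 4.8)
The plan is to transport the whole error into the $(\Mm,\Qm(\Sc))$-GFT domain, where Parseval turns the $\Qm(\Sc)$-norm into an $\ell_2$-norm and the spectral folding pairing makes the low-frequency reconstruction error coincide with the high-frequency energy. Since the columns of $\Um$ are $\Qm(\Sc)$-orthonormal, for any $\zv=\Um\hat\zv$ we have $\|\zv\|_{\Qm(\Sc)}^2=\hat\zv^\top\Um^\top\Qm(\Sc)\Um\hat\zv=\|\hat\zv\|_2^2$. Applying this to $\zv=\tilde\xv-\xv$ and splitting the GFT coefficients over the index blocks $\Lc$, $\Mc$, $\Hc$ reduces the claim to evaluating three squared-norm contributions.

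First I would observe that, because $\dim(\Lscr)=|\Sc|$ and the eigenvalues are sorted increasingly, $\Kc=\{1,\dots,|\Sc|\}$ coincides with $\Lc$, so the reconstruction \eqref{eq:sfmqrec} reads $\tilde\xv=2\Um_{\Vc\Lc}\Um_{\Sc\Lc}^\top\Mm_{\Sc\Sc}\xv_\Sc$ and lies entirely in $\Lscr$. Hence its GFT coefficients on $\Mc$ and $\Hc$ vanish, and the middle- and high-frequency contributions to $\|\tilde\xv-\xv\|_{\Qm(\Sc)}^2$ are exactly $\|\Delta\hat\xv_m(\Sc)\|_2^2$ and $\|\Delta\hat\xv_h(\Sc)\|_2^2$. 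It then remains to show that the low-frequency contribution $\|\hat{\tilde\xv}_l(\Sc)-\hat\xv_l(\Sc)\|_2^2$ equals $\|\Delta\hat\xv_h(\Sc)\|_2^2$; summing the three pieces gives the stated $2\|\Delta\hat\xv_h(\Sc)\|_2^2+\|\Delta\hat\xv_m(\Sc)\|_2^2$.

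For this last identity I would use that $\Qm(\Sc)$ is block diagonal to write $\hat\xv_l(\Sc)=\Um_{\Vc\Lc}^\top\Qm(\Sc)\xv=\Um_{\Sc\Lc}^\top\Mm_{\Sc\Sc}\xv_\Sc+\Um_{\Scc\Lc}^\top\Mm_{\Scc\Scc}\xv_\Scc$, while $\Qm(\Sc)$-orthonormality of $\Um_{\Vc\Lc}$ gives $\hat{\tilde\xv}_l(\Sc)=2\Um_{\Sc\Lc}^\top\Mm_{\Sc\Sc}\xv_\Sc$, so that $\hat{\tilde\xv}_l(\Sc)-\hat\xv_l(\Sc)=\Um_{\Sc\Lc}^\top\Mm_{\Sc\Sc}\xv_\Sc-\Um_{\Scc\Lc}^\top\Mm_{\Scc\Scc}\xv_\Scc$. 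The key step is spectral folding: since $\Jm\Qm(\Sc)\Jm=\Qm(\Sc)$, the map $\uv_i\mapsto\Jm\uv_i$ carries each $\Qm(\Sc)$-normalized low-frequency eigenvector ($\lambda_i<1$) to a $\Qm(\Sc)$-normalized high-frequency eigenvector ($2-\lambda_i>1$), yielding a bijection between $\Lc$ and $\Hc$. Ordering the columns of $\Um_{\Vc\Hc}$ by this pairing, the block form $\Jm=\diag(\Id,-\Id)$ forces $\Um_{\Sc\Hc}=\Um_{\Sc\Lc}$ and $\Um_{\Scc\Hc}=-\Um_{\Scc\Lc}$, whence $\Delta\hat\xv_h(\Sc)=\Um_{\Sc\Lc}^\top\Mm_{\Sc\Sc}\xv_\Sc-\Um_{\Scc\Lc}^\top\Mm_{\Scc\Scc}\xv_\Scc$, which is precisely $\hat{\tilde\xv}_l(\Sc)-\hat\xv_l(\Sc)$. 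Taking squared $\ell_2$-norms closes the argument.

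The main obstacle is making the low-to-high eigenvector pairing rigorous under repeated eigenvalues: the folding map acts on eigenspaces, so I must fix the high-frequency basis as the $\Jm$-image of the chosen low-frequency basis rather than assume simple spectra. I would also note that the squared norm $\|\Delta\hat\xv_h(\Sc)\|_2^2$ is the squared $\Qm(\Sc)$-norm of the projection of $\xv$ onto $\Hscr$ and is therefore basis-independent, so any sign or permutation bookkeeping in the pairing is immaterial; and that no low frequency folds into $\Mscr$, since $\lambda_i=1$ is the only self-paired frequency and is excluded from both $\Lc$ and $\Hc$ by definition.
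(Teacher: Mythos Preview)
Your proposal is correct and follows essentially the same approach as the paper: both move to the GFT domain via the generalized Parseval identity and then use spectral folding to identify the low-frequency reconstruction error with the high-frequency content of $\xv$. The only cosmetic difference is that the paper packages the folding step as the single identity $\Um^\top\Qm(\Sc)\Jm\Um=\Rm$ (equivalently, writing $\tilde\xv=\Um_{\Vc\Lc}\Um_{\Vc\Lc}^\top\Qm(\Sc)(\xv+\Jm\xv)$ so that $\Um^\top\Qm(\Sc)\Jm\xv=\Rm\hat\xv$), whereas you unroll that identity blockwise as $\Um_{\Sc\Hc}=\Um_{\Sc\Lc}$, $\Um_{\Scc\Hc}=-\Um_{\Scc\Lc}$ after fixing the pairing; these are the same statement, and your remark that the squared norm is basis-independent correctly handles the permutation/sign ambiguity.
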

\begin{proof}
\begin{align}
    \|\tilde\xv-\xv\|_{\Qm(\Sc)}^2 &= \|\Um_{\Vc\Lc}\Um_{\Vc\Lc}\tr\Qm(\Sc)(\xv+\Jm\xv) -  \xv\|_{\Qm(\Sc)}^2 \notag \\
    &=\left\Vert \begin{bmatrix} \Delta\hat\xv_l(\Sc)+\Delta\hat\xv_h(\Sc) \\ \bs 0 \end{bmatrix} - \hat\xv(\Sc) \right\Vert_2^2 \notag \\
    &= \left\Vert \begin{bmatrix} \Delta\hat\xv_h(\Sc) \\ -\Delta\hat\xv_m(\Sc) \\ -\Delta\hat\xv_h(\Sc) \end{bmatrix}\right\Vert_2^2 \notag \\
    &= 2\|\Delta\hat\xv_h(\Sc)\|_2^2+\|\Delta\hat\xv_m(\Sc)\|_2^2.
\end{align}
The second equality follows from the generalized Parseval's identity~\cite{girault18}, and the equality $\Um\tr\Qm\Jm\xv=\Um\tr\Qm\Jm\Um\hat\xv = \Rm\hat\xv$ which follows from  $\Um\tr\Qm(\Sc)\Jm\Um = \Rm$.  $\Rm$ is the anti-diagonal permutation matrix~\cite{Paki24}. 
\end{proof}

Note that this property of folding higher-frequency energy into low-frequency (aliasing) while sampling is due to the SF property. Other GFTs spread the error across all frequencies while sampling, which makes it more challenging to isolate and minimize those errors~\cite{Paki24}. The above result states that the reconstruction error with the interpolator in ~\eqref{eq:sfmqrec} depends only on the higher frequency component of the signal. Therefore, minimizing the energy in high-frequency as a function of $\Sc$ effectively optimizes the frequencies for better energy compaction, which is favourable for the reconstruction in~\eqref{eq:sfmqrec}.
\begin{theorem}
    The covariance of the high-frequency transform coefficient $\Delta\hat\xv_h(\Sc)$ is a diagonal matrix given by $\text{cov}(\Delta\hat\xv_h(\Sc)) = \text{diag}\left( \frac{1}{\lambda_N},  \hdots, \frac{1}{\lambda_{N-|\Sc|+1}} \right)$.
\end{theorem}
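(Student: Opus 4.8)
The plan is to compute the covariance directly from the definition $\Delta\hat\xv_h(\Sc) = \Um_{\Vc\Hc}\tr\Qm(\Sc)\xv$ and to propagate the covariance $\bs\Sigma$ of $\xv$ through the linear map $\Um_{\Vc\Hc}\tr\Qm(\Sc)$. Assuming $\xv$ is centered, the covariance transforms as $\text{cov}(\Am\xv)=\Am\,\bs\Sigma\,\Am\tr$, and since $\Qm(\Sc)$ is symmetric I would first write
\[
\text{cov}(\Delta\hat\xv_h(\Sc)) = \Um_{\Vc\Hc}\tr\Qm(\Sc)\,\bs\Sigma\,\Qm(\Sc)\Um_{\Vc\Hc}.
\]
The whole statement then reduces to simplifying the central factor $\Qm(\Sc)\bs\Sigma\Qm(\Sc)$.

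The key algebraic step is to re-express $\bs\Sigma=\Mm^{-1}$ in the $(\Mm,\Qm(\Sc))$-GFT basis. Starting from the generalized eigenproblem, written in matrix form as $\Mm\Um=\Qm(\Sc)\Um\bs\Lambda$, together with the $\Qm(\Sc)$-orthonormality $\Um\tr\Qm(\Sc)\Um=\Id$, I would derive $\Mm=\Qm(\Sc)\Um\bs\Lambda\Um\tr\Qm(\Sc)$ and hence, inverting and using $\Um^{-1}=\Um\tr\Qm(\Sc)$, obtain the clean identity
\[
\bs\Sigma = \Mm^{-1} = \Um\bs\Lambda^{-1}\Um\tr.
\]
This is the crux of the argument: both copies of $\Qm(\Sc)$ disappear, leaving an ordinary (non-weighted) eigendecomposition of the covariance.

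Substituting this into the covariance gives
\[
\text{cov}(\Delta\hat\xv_h(\Sc)) = \bigl(\Um_{\Vc\Hc}\tr\Qm(\Sc)\Um\bigr)\bs\Lambda^{-1}\bigl(\Um\tr\Qm(\Sc)\Um_{\Vc\Hc}\bigr).
\]
Here I would invoke $\Qm(\Sc)$-orthonormality once more: $\Um\tr\Qm(\Sc)\Um_{\Vc\Hc}$ is simply the selection matrix picking out the columns of $\Id$ indexed by $\Hc$, and likewise for its transpose. The product therefore collapses to the principal submatrix of the diagonal matrix $\bs\Lambda^{-1}$ on the index set $\Hc$, which is diagonal with entries $1/\lambda_i$ for $i\in\Hc$. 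Finally, I would identify $\Hc$ as the top $|\Sc|$ frequency indices $\{N-|\Sc|+1,\dots,N\}$ using the previously stated dimension count $\dim(\Hscr)=|\Sc|$, so that the diagonal entries are exactly $1/\lambda_N,\dots,1/\lambda_{N-|\Sc|+1}$.

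I expect the main obstacle to be the bookkeeping in the second step, namely establishing $\bs\Sigma=\Um\bs\Lambda^{-1}\Um\tr$ with the $\Qm(\Sc)$ factors cancelling correctly; once that identity is in hand, the selection-matrix collapse and the ordering of eigenvalues follow essentially immediately from $\Qm(\Sc)$-orthonormality and the dimension result. A minor point worth stating explicitly is that $\xv$ is taken to be zero-mean, so that $\text{cov}(\Delta\hat\xv_h(\Sc))=\Ex[\Delta\hat\xv_h(\Sc)\Delta\hat\xv_h(\Sc)\tr]$.
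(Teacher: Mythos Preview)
Your proposal is correct and follows essentially the same route as the paper: both compute $\Um_{\Vc\Hc}\tr\Qm(\Sc)\,\Mm^{-1}\,\Qm(\Sc)\Um_{\Vc\Hc}$, insert $\Mm^{-1}=\Um\bs\Lambda^{-1}\Um\tr$, and collapse the outer factors via $\Um\tr\Qm(\Sc)\Um=\Id$. Your write-up is actually more explicit than the paper's in justifying the identity $\Mm^{-1}=\Um\bs\Lambda^{-1}\Um\tr$ and in noting the zero-mean assumption, both of which the paper leaves implicit.
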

\begin{proof}
\begin{align}
    \text{cov}(\Delta\hat\xv_h(\Sc)) \notag
    &= \Ex[\Um_{\Vc\Hc}^\top \Qm \xv \xv \tr\Qm\Um_{\Vc\Hc}] \\ \notag
    &= \Um_{\Vc\Hc}^\top \Qm(\Sc) \Ex[\xv \xv \tr]\Qm(\Sc)\Um_{\Vc\Hc} \\ \notag
    &= \Um_{\Vc\Hc}^\top \Qm(\Sc) \Mm^{-1}\Qm(\Sc)\Um_{\Vc\Hc} \\ \notag
    &= \Um_{\Vc\Hc}^\top \Qm(\Sc) \Um \bs\Lambda^{-1}\Um\tr\Qm(\Sc) \Um_{\Vc\Hc} \\ 
    &= \text{diag}\left( \frac{1}{\lambda_N},  \hdots, \frac{1}{\lambda_{N-|\Sc|+1}} \right).
\end{align}
\noindent The last equality follows from $\Um\tr\Qm(\Sc)\Um=\Id$. 
\end{proof}
It is worth noting that the covariance matrix simplifies to a diagonal matrix by choosing $\Qm(\Sc)$  as in~\eqref{eq:def_Q}. Furthermore, this allows us to compare the worst-case high frequency error corresponding to different subspaces generated by different sampling sets of the same size in terms of $\lambda_{N-|\Sc|+1}$.
\begin{rem}
   e can also show that $\Delta\hat\xv_m(\Sc)$ and $\Delta\hat\xv_h(\Sc)$ are uncorrelated, and $\text{cov}(\Delta\hat\xv_m(\Sc)) = \Id$ is an $(|\Scc| - |\Sc|)$-dimensional identity matrix. When the sampling set size is fixed, as is common in sampling set selection methods, $\text{cov}(\Delta\hat\xv_m(\Sc))$ becomes a constant, thus we can omit  $\Delta\hat\xv_m(\Sc)$ from the sampling set selection criteria.
\end{rem}
\begin{prop} The E-optimal sampling design minimizing the maximum singular value of $\text{cov}(\Delta\hat\xv_h(\Sc))$ is given by $\Sc^{\text{opt}} =  \text{arg }\underset{|\Sc|=s}{\text{min }} \: \lambda_{|\Sc|}.$
    
\end{prop}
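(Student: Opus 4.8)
The plan is to reduce the E-optimal criterion to a one-dimensional monotonicity argument, using the diagonal structure from Theorem~2 together with the spectral folding pairing of eigenvalues. First I would observe that $\text{cov}(\Delta\hat\xv_h(\Sc))$ is diagonal and positive definite, hence symmetric positive semidefinite, so its singular values coincide with its diagonal entries and the maximum singular value is simply the largest entry. Since the frequencies obey $0 \leq \lambda_1 \leq \cdots \leq \lambda_N$, the high-frequency block $\lambda_{N-|\Sc|+1}, \hdots, \lambda_N$ has smallest element $\lambda_{N-|\Sc|+1}$, and therefore $\sigma_{\max}\big(\text{cov}(\Delta\hat\xv_h(\Sc))\big) = 1/\lambda_{N-|\Sc|+1}$.

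Next I would translate this quantity into the low-frequency index appearing in the claim. Under the standing assumptions ($|\Sc| < |\Scc|$ and $\text{rank}(\Mm_{\Sc\Scc}) = |\Sc|$) we have $\dim(\Lscr) = \dim(\Hscr) = |\Sc|$, so the $(\Mm,\Qm(\Sc))$-GFT has exactly $|\Sc|$ eigenvalues below $1$ and $|\Sc|$ eigenvalues above $1$. The spectral folding property sends each eigenpair $(\uv_k,\lambda_k)$ to $(\Jm\uv_k,\, 2-\lambda_k)$, so the eigenvalues come in folded pairs, and because $\lambda \mapsto 2-\lambda$ is order-reversing the increasing global ordering forces $\lambda_{N-k+1} = 2 - \lambda_k$ for $k = 1, \hdots, |\Sc|$. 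Taking $k = |\Sc|$ gives $\lambda_{N-|\Sc|+1} = 2 - \lambda_{|\Sc|}$, and since $\lambda_{|\Sc|} < 1$ the denominator $2 - \lambda_{|\Sc|}$ is strictly positive, so the transformation is well defined.

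Finally I would close with monotonicity. Writing the objective as $\sigma_{\max} = 1/(2 - \lambda_{|\Sc|})$, minimizing it over all sampling sets with $|\Sc| = s$ is equivalent to maximizing $2 - \lambda_{|\Sc|}$, which is in turn equivalent to minimizing $\lambda_{|\Sc|} = \lambda_s$; this yields $\Sc^{\text{opt}} = \argmin_{|\Sc|=s} \lambda_{|\Sc|}$, as claimed. The only delicate point I anticipate is the index alignment in the second step: one must argue that the $k$-th smallest eigenvalue folds precisely to the $k$-th largest, which follows from the order-reversing nature of $\lambda \mapsto 2-\lambda$, but should be stated carefully. Here the full-rank assumption is what pins down $\dim(\Mscr) = |\Scc| - |\Sc|$ and rules out ambiguous ties at $\lambda = 1$ that could otherwise misalign the folded indices.
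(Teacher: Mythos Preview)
Your proposal is correct and follows essentially the same approach as the paper: identify $\sigma_{\max}(\text{cov}(\Delta\hat\xv_h(\Sc))) = 1/\lambda_{N-|\Sc|+1}$ from the diagonal form in Theorem~2, invoke spectral folding to write $\lambda_{N-|\Sc|+1} = 2 - \lambda_{|\Sc|}$, and conclude by monotonicity. The paper's proof is more terse (it simply chains these equalities in one display and cites spectral folding for the last step), whereas you spell out the index-alignment argument and the role of the full-rank assumption more carefully; but the logical skeleton is identical.
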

\begin{proof}
\begin{align}
    \Sc^{\text{opt}} & = \text{arg }\underset{|\Sc|=s}{\text{min }} \sigma_{\text{max}}(\text{cov}(\Delta\hat\xv_h(\Sc)))   \notag \\ & = \text{arg }\underset{|\Sc|=s}{\text{min }} \: \frac{1}{\lambda_{N-|\Sc|+1}} = \text{arg }\underset{|\Sc|=s}{\text{min }} \: \lambda_{|\Sc|}    \label{eq:samp_obj1}
\end{align}
The last equality follows from the spectral folding property, where $\lambda_{N-|\Sc|+1} = 2 - \lambda_{|\Sc|}$, with $\lambda_{|\Sc|}$ being the largest frequency of the SF $(\Mm, \Qm(\Sc))$-GFT that is smaller than $\lambda = 1$. 
\end{proof}
\vspace{-1ex}
\subsection{Relation to GMRF model}
\vspace{-1ex}
Assume $\xv\sim\Nc(\bs 0, \bs\Sigma)$. It is known that the minimum-mean square error (MMSE) estimation of the GMRF is given by $\xv_\Scc^\text{MMSE} = \bs\Sigma_{\Scc\Sc}\bs\Sigma_\Sc^{-1}\xv_\Sc$. The following theorem establishes that the optimal MMSE estimation in a GMRF can be interpreted as a subspace reconstruction with subspace adaptive frequency definition\footnote{\label{note:Q}In this subsection, we use $\Qm = \Qm(\Sc)$ for ease of notation.}:
\begin{theorem}
    \label{th:theorem1}
    In a GMRF, if we assume $\Mm=\bs\Sigma^{-1}$ and $\Qm$ is selected according to \eqref{eq:def_Q}, then $\xv_\Scc^\text{MMSE} = 2\Um_{\Sc\Lc}(\Id-\bs\Lambda_\Lc)\Um_{\Scc\Lc}\tr\Qm_\Scc\xv_\Scc$.
\end{theorem}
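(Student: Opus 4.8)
The plan is to sidestep the subspace formula at first and instead start from the classical Gaussian conditional mean, then re-express it through the SF $(\Mm,\Qm)$-GFT. Since $\Mm=\bs\Sigma^{-1}$, the Schur-complement identity for a jointly Gaussian vector gives $\bs\Sigma_{\Scc\Sc}\bs\Sigma_\Sc^{-1}=-\Mm_{\Scc\Scc}^{-1}\Mm_{\Scc\Sc}$, so that $\xv_\Scc^{\mathrm{MMSE}}=-\Mm_{\Scc\Scc}^{-1}\Mm_{\Scc\Sc}\xv_\Sc$ (here $\Mm\succ0$ guarantees that the principal block $\Mm_{\Scc\Scc}$ is invertible). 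The target is therefore to show $-\Mm_{\Scc\Scc}^{-1}\Mm_{\Scc\Sc}=2\Um_{\Scc\Lc}(\Id-\bs\Lambda_\Lc)\Um_{\Sc\Lc}\tr\Qm_\Sc$, after which applying both sides to $\xv_\Sc$ yields the claim.

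Next I would extract a block form of the generalized eigen-equation. Partitioning each eigenvector as $\uv_k=[(\uv_k^\Sc)\tr\ (\uv_k^\Scc)\tr]\tr$ and reading off the $\Scc$-block of $\Mm\uv_k=\lambda_k\Qm\uv_k$ with the block-diagonal $\Qm$ of \eqref{eq:def_Q} gives $\Mm_{\Scc\Sc}\uv_k^\Sc=(\lambda_k-1)\Mm_{\Scc\Scc}\uv_k^\Scc$, i.e.\ $-\Mm_{\Scc\Scc}^{-1}\Mm_{\Scc\Sc}\uv_k^\Sc=(1-\lambda_k)\uv_k^\Scc$. Collecting the low-band indices $k\in\Lc$ as columns (recall $\dim\Lscr=|\Sc|$, so $|\Lc|=|\Sc|$) produces the matrix identity $-\Mm_{\Scc\Scc}^{-1}\Mm_{\Scc\Sc}\,\Um_{\Sc\Lc}=\Um_{\Scc\Lc}(\Id-\bs\Lambda_\Lc)$.

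I expect the crux of the argument to be inverting the square matrix $\Um_{\Sc\Lc}$ so that it can be cancelled on the right. The key is the SF pairing: for each $k\in\Lc$ the vector $\Jm\uv_k$ lies in $\Hscr$, so the columns of $\Jm\Um_{\Vc\Lc}$ span $\Hscr$, which is $\Qm$-orthogonal to $\Lscr$; hence $\Um_{\Vc\Lc}\tr\Qm\Jm\Um_{\Vc\Lc}=\bs 0$. Since $\Qm\Jm=\text{diag}(\Mm_{\Sc\Sc},-\Mm_{\Scc\Scc})$, this reads $\Um_{\Sc\Lc}\tr\Mm_{\Sc\Sc}\Um_{\Sc\Lc}=\Um_{\Scc\Lc}\tr\Mm_{\Scc\Scc}\Um_{\Scc\Lc}$. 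Adding the $\Lc\Lc$-block of $\Qm$-orthonormality $\Um\tr\Qm\Um=\Id$, namely $\Um_{\Sc\Lc}\tr\Mm_{\Sc\Sc}\Um_{\Sc\Lc}+\Um_{\Scc\Lc}\tr\Mm_{\Scc\Scc}\Um_{\Scc\Lc}=\Id$, collapses everything to $2\,\Um_{\Sc\Lc}\tr\Qm_\Sc\Um_{\Sc\Lc}=\Id$. As $\Um_{\Sc\Lc}$ is square, this simultaneously proves it is invertible and identifies $\Um_{\Sc\Lc}^{-1}=2\,\Um_{\Sc\Lc}\tr\Qm_\Sc$.

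Finally I would substitute this inverse into the identity of the second step: $-\Mm_{\Scc\Scc}^{-1}\Mm_{\Scc\Sc}=\Um_{\Scc\Lc}(\Id-\bs\Lambda_\Lc)\Um_{\Sc\Lc}^{-1}=2\,\Um_{\Scc\Lc}(\Id-\bs\Lambda_\Lc)\Um_{\Sc\Lc}\tr\Qm_\Sc$, and apply it to $\xv_\Sc$. The main obstacle is genuinely the third step: it is precisely the SF structure (block-diagonality of $\Qm\Jm$ together with $\Jm$ mapping $\Lscr$ onto $\Hscr$) that both forces $\Um_{\Sc\Lc}$ to be full rank and pins down $\Um_{\Sc\Lc}\tr\Qm_\Sc\Um_{\Sc\Lc}=\tfrac12\Id$; everything else is bookkeeping. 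One subtlety I would verify is that the full-row-rank hypothesis on $\Mm_{\Sc\Scc}$ is what secures $\dim\Lscr=|\Sc|$ (so that $\Um_{\Sc\Lc}$ is square in the first place), keeping the half-identity consistent.
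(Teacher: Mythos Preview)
Your argument is correct, but it follows a genuinely different path from the paper's. The paper reads off the full $(\Sc,\Scc)$ off-diagonal block of $\Id-\Qm^{-1}\Mm=\Um(\Id-\bs\Lambda)\Um\tr\Qm$, obtaining $-\Qm_\Sc^{-1}\Mm_{\Sc\Scc}=\Um_{\Sc\Ic}(\Id-\bs\Lambda)\Um_{\Scc\Ic}\tr\Qm_\Scc$ with $\Ic=\Lc\cup\Mc\cup\Hc$, and then invokes the explicit permutation relations $\Um_{\Sc\Hc}=\Um_{\Sc\Lc}\Rm$, $\Um_{\Scc\Hc}=-\Um_{\Scc\Lc}\Rm$ and $\Id-\bs\Lambda_\Lc=-\Rm(\Id-\bs\Lambda_\Hc)\Rm$ to fold the $\Hc$-summand onto the $\Lc$-summand (the $\Mc$-summand disappears because $\bs\Lambda_\Mc=\Id$), which produces the factor $2$ directly with no matrix inversion beyond $\Qm_\Sc^{-1}$. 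You instead restrict the generalized eigen-equation to the $\Lc$-columns from the outset, which yields only a one-sided identity and therefore forces you to invert the square block $\Um_{\Sc\Lc}$; the half-identity $2\,\Um_{\Sc\Lc}\tr\Qm_\Sc\Um_{\Sc\Lc}=\Id$ you derive from $\Um_{\Vc\Lc}\tr\Qm\Jm\Um_{\Vc\Lc}=\bs 0$ together with $\Qm$-orthonormality is exactly what is needed, and it is also the structural fact behind the simplified interpolator~\eqref{eq:sfmqrec}. The paper's route is a bit shorter and makes transparent that the middle band plays no role; your route has the merit of isolating $\Um_{\Sc\Lc}^{-1}=2\,\Um_{\Sc\Lc}\tr\Qm_\Sc$ as a standalone consequence of spectral folding, which gives a cleaner structural reason for the ubiquitous factor $2$.
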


From the above theorem, we can infer that the MMSE estimator is a sampling set-adaptive subspace reconstruction. The only difference between MMSE estimator and the interpolator in~\eqref{eq:sfmqrec} is the term $(\Id-\bs\Lambda_\Lc)$. The diagonal entries of $(\Id-\bs\Lambda_\Lc)$ could be viewed as penalties on each of these frequencies. These bases and penalties come from the GMRF model as a function of $\Sc$. 
In the presence of model mismatches, the reliability of these subspaces and frequencies from the graph model becomes questionable. Because of this uncertainty, the interpolator in \eqref{eq:sfmqrec} does not penalize these frequencies and uses the same sampling set-adaptive subspace as the MMSE estimator. This intuition explains why the interpolator in \cite{Paki24} could work better for non-GMRF data in the presence of model mismatch. 

\begin{proof}[Proof of \cref{th:theorem1}]
    Consider the following block-inverse formula:
    \begin{equation} \bs\Mm = \bs\Sigma^{-1}=\begin{bmatrix} \bs\Sigma^{-1}_{\Sc|\Scc}& -\bs\Sigma^{-1}_{\Sc|\Scc}\bs\Sigma_{\Sc\Scc}\bs\Sigma_{\Scc}^{-1}\\ -\bs\Sigma_{\Scc|\Sc}^{-1}\bs\Sigma_{\Scc\Sc}\bs\Sigma_{\Sc}^{-1}& \bs\Sigma_{\Scc|\Sc}^{-1} \end{bmatrix},
    \label{eq:blk_inv}
    \end{equation}
    where $\bs\Sigma_{\Sc|\Scc}=\bs\Sigma_{\Sc}-\bs\Sigma_{\Sc\Scc}\bs\Sigma_{\Scc}^{-1}\bs\Sigma_{\Scc\Sc}$, and similarly for $\bs\Sigma_{\Scc|\Sc}$.  We can observe that $\bs\Sigma_{\Scc\Sc}\bs\Sigma_\Sc^{-1} = -\Qm_\Scc^{-1}\Mm_{\Scc\Sc}$. Then, starting from the identity $\Qm^{-1}\Mm = \Um\bs\Lambda\Um\tr\Qm$, consider the block-wise expansion of $\Id - \Qm^{-1}\Mm = \Um(\Id - \bs\Lambda)\Um\tr\Qm$. We can show that $-\Qm_\Sc^{-1}\Mm_{\Sc\Scc} = \Um_{\Sc\Ic}(\Id - \bs\Lambda)\Um_{\Scc\Ic}\tr\Qm_\Scc$, where $\Ic = \Lc \cup \Mc \cup \Hc$. Then, using the spectral folding properties $\Id - \bs\Lambda_\Lc = -\Rm(\Id - \bs\Lambda_\Hc)\Rm$, $\Um_{\Sc\Hc} = \Um_{\Sc\Lc}\Rm$, and $\Um_{\Scc\Hc} = -\Um_{\Scc\Lc}\Rm$, we can further simplify the above expression as follows:
    \begin{align}
        -\Qm_\Sc^{-1}\Mm_{\Sc\Scc} & = \Um_{\Sc\Lc}(\Id-\bs\Lambda_\Lc)\Um_{\Scc\Lc}\tr\Qm_\Scc+ \notag \\ & \qquad \qquad \Um_{\Sc\Hc}(\Id-\bs\Lambda_\Hc)\Um_{\Scc\Hc}\tr\Qm_\Scc \notag \\ 
        &= 2\Um_{\Sc\Lc}(\Id-\bs\Lambda_\Lc)\Um_{\Scc\Lc}\tr\Qm_\Scc. \label{eq:svd}
    \end{align}
 \end{proof}
\subsection{Representative sampling partitions}
\vspace{-1ex}
Given a sensor network with vertices $\Vc$, representative sampling subset partitioning involves finding a partition $\Pc = \{\Sc_0, \Sc_1, \dots, \Sc_{p-1}\}$, where $\cup_{i=0}^{p-1} \Sc_i = \Vc$, such that each subset $\Sc_i$ serves as a sampling set to reconstruct the signal in its complement $\Sc_i^c$. 
Based on~\eqref{eq:samp_obj1} and following the algorithm proposed in \cite{Chakraborty23}, the algorithm we use for partitioning is outlined in ~\cref{alg:part_algo}.

    \begin{algorithm}[t]
    \begin{algorithmic}[1]
    \caption{Greedy representative sampling subset partitioning} \label{alg:part_algo}
    \renewcommand{\algorithmicrequire}{\textbf{Input:}}
     \renewcommand{\algorithmicensure}{\textbf{Output:}}
     \REQUIRE $\Mm, p$
            \FOR{$i = 1$ to $N$}
                \STATE $m=i \text{ mod } p$
                \STATE $q^* = \text{arg } \underset{q \in \Vc\backslash\cup_{j=0}^{p-1}\Sc_j}{\text{min}} \: \lambda_{|\Sc_m \:\cup \: \{q\}|} $
                \STATE $\Sc_m$ $\gets$ $\: \Sc_m \:\cup \: \{ q^* \}$
            \ENDFOR
    \ENSURE  $\Pc = \{\Sc_0, \Sc_1, \dots, \Sc_{p-1}\}$
    \end{algorithmic}  
    \end{algorithm}
\vskip -1ex

\subsection{Efficient implementation}
\label{sec:approximation}
To reduce the computations associated with the inversion of $\Qm_\Sc$, we use the following Neumann series expansion~\cite{zhang2011matrix}:
\begin{equation}
    \Qm_{\Sc}^{-1} = \Dm_\Sc^{-1/2}\left(\sum_{k=0}^\infty(\Dm_\Sc^{-1/2}\Wm_\Sc\Dm^{-1/2}_\Sc)^k\right)\Dm^{-1/2}_\Sc, 
\end{equation}
where $\Dm$ is a diagonal matrix with $\Dm_{i,i}=\Mm_{i,i}$, and $\Wm=\Dm-\Mm$ is non negative with zero diagonal. The zero-order approximation of $\Qm_\Sc^{-1}$ is given by $\Qm_\Sc^{-1} \approx \Dm^{-1}_\Sc$, and similarly, $\Qm_\Scc^{-1} \approx \Dm^{-1}_\Scc$. Using ~\eqref{eq:svd}, we can show that $\sigma_{\text{min}}(\Qm_\Sc^{-1/2}\Mm_{\Sc\Scc}\Qm_\Scc^{-1/2}) = 1-\lambda_{|\Sc|}$.  Then, the approximated version \eqref{eq:samp_obj1} is given as follows:
\begin{align}
    \Sc^{\text{opt}} & = \text{arg }\underset{|\Sc|=s}{\text{max }} \sigma_{\text{min}}(\Dm_\Sc^{-1/2}\Mm_{\Sc\Scc}\Dm_\Scc^{-1/2}).    \label{eq:samp_obj3}
\end{align}
In step $3$ of ~\cref{alg:part_algo}, we solve \eqref{eq:samp_obj3} instead of \eqref{eq:samp_obj1} for efficient computations.We solve this problem using a greedy approach, similar to other greedy sampling methods in the GSP literature~\cite{tanaka20,chen15}.
\vspace{-1ex}
\section{EMPIRICAL RESULTS}
\vspace{-1ex}
\label{sec:experiments}
We use a synthetic example of a sensor network capturing spatial Gaussian processes to demonstrate the efficiency of the proposed partitioning method. We compare  against~\cite[Algorithm 4]{Chakraborty23}, which optimizes the partition under the assumption of a bandlimited model for signals on graphs, as it is more comparable to our approach. We use the following error metric from \cite{Chakraborty23} to compare the partitions obtained from different methods:
\begin{equation}
    Err(\Pc) = \frac{1}{p}\sum_{j=0}^{p-1}\left(\frac{1}{M}\sum_{i=1}^M \frac{\|\tilde\xv_{\Sc_j^c}^i-\xv_{\Sc_j^c}^i\|_2^2}{\|\xv^i\|_2^2}\right), \label{eq:err_metric}
\end{equation}
where $M$ is the number of signals used to evaluate the performance.

    \begin{figure} 
        \centering
        \subfloat[\label{fig:gp_sensors}]{%
       \includegraphics[width = 0.26 \textwidth, trim={0cm 0cm 0cm 0.5cm}, clip ]{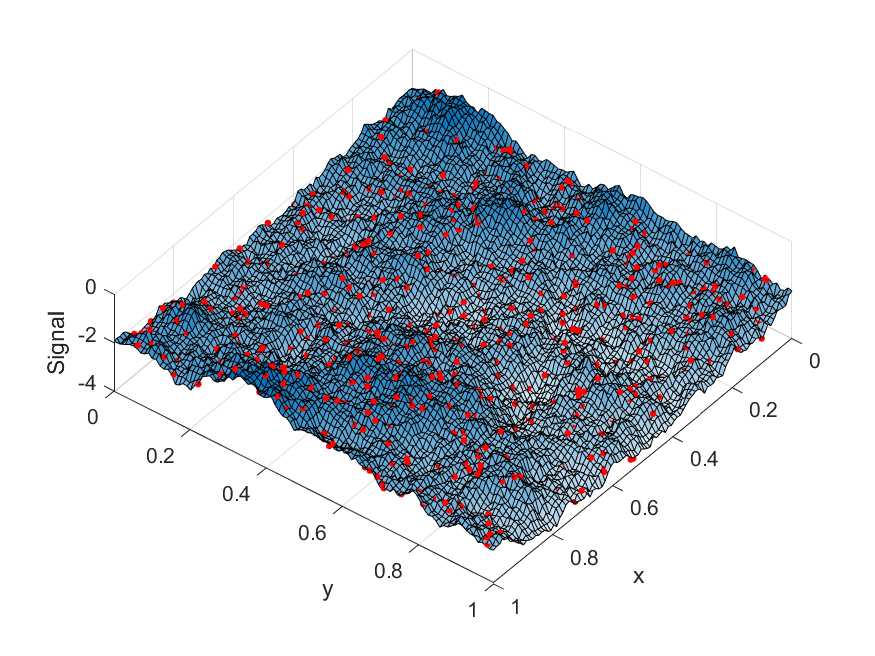}}
        \hfill
      \subfloat[\label{fig:graph_r1}]{%
            \includegraphics[width = 0.22 \textwidth, trim={0cm 0cm 0cm 0.1cm}, clip ]{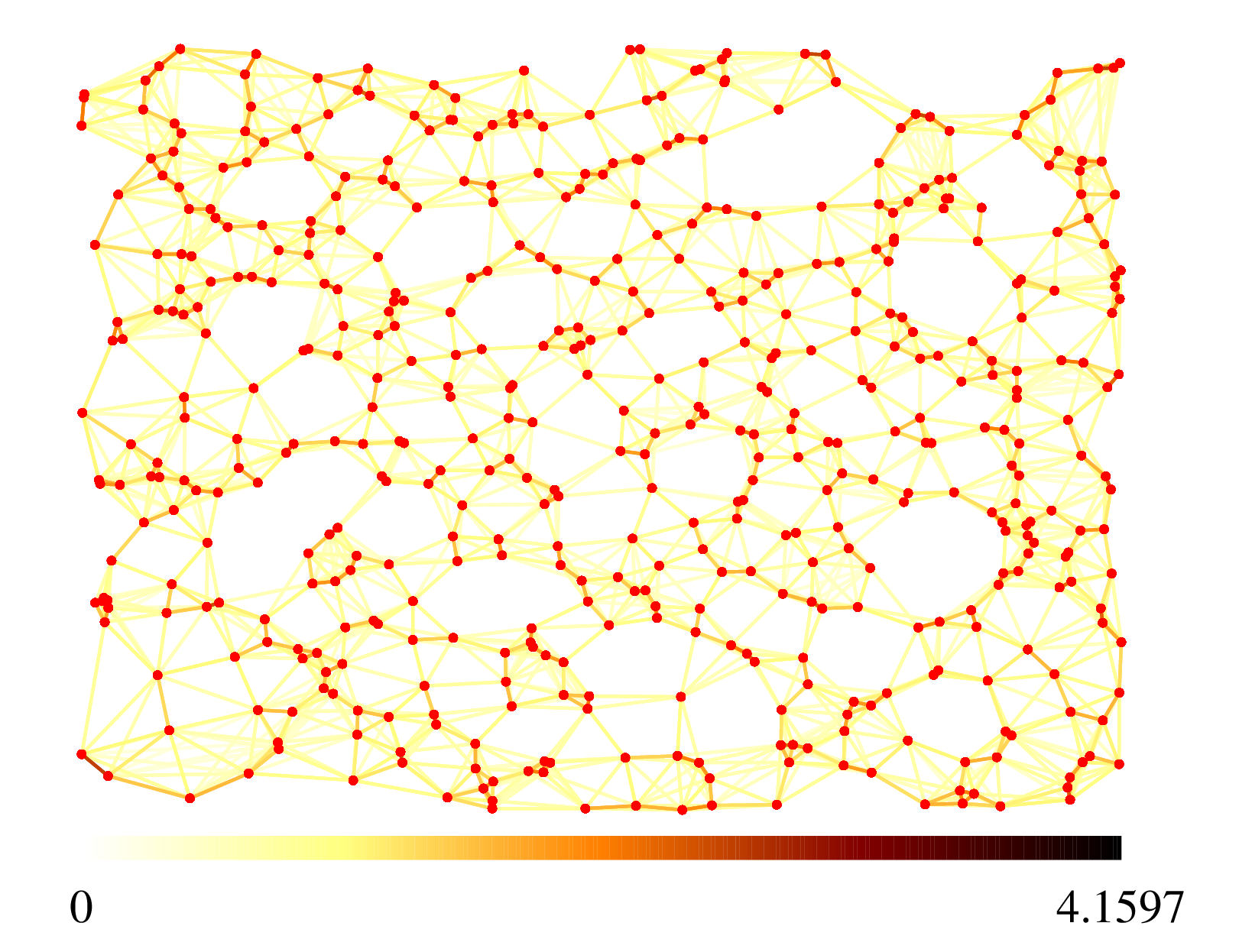}}
        \vskip -2ex
        \caption{(a) An instance of the synthetic Gaussian process with the sensor locations marked in red; (b) The sensor network graph, learned from the synthetic data corresponding to $\sigma=1$.}        
        \vskip -2.5ex
        
    \end{figure}
We consider a dense grid within a $1 \times 1$ square in $\mathbb{R}^2$ and consider Gaussian process with covariance $\Sm$ to model a continuous process in space, where $\Sm_{i,j} = \exp(-d_{i,j}/\sigma^2)$. Here, $d_{i,j}$ represents the Euclidean distance between points $i$ and $j$, and $\sigma$ is a parameter controlling the smoothness of the signal. 
Then, we select $N = 500$ random locations within the square to simulate the sampling of the Gaussian process using sensors. These sensor locations are fixed throughout all the experiments in this section. A realization  of the Gaussian process and sensor locations are shown in~\cref{fig:gp_sensors}. We denote the signal to be measured by $\xv$. We generate $5000$ independent identically distributed (i.i.d.) realizations of $\xv$ and obtain its sample covariance matrix $\bs\Sigma$. The graph for the sensor network is a  combinatorial graph Laplacian learned from this covariance matrix using~\cite[Algorithm 2]{Egilmez17}. This graph is  constrained so that  the sensors are connected only to their neighbours within a radius of $r = 0.3$. The sensor network graph learned for the smoothness parameter $\sigma=1$ (smooth variation in space) is shown in ~\cref{fig:graph_r1}.   

The method in~\cite{Chakraborty23} requires the bandwidth of the signals as an input for the algorithm. To estimate the bandwidth for the bandlimited model with respect to $(\Mm, \Id)$-GFT, we use the same signals employed for learning the graph. First, we generate a partition with $p = 5$ using \cite[Algorithm 4]{Chakraborty23}, assuming the bandwidth is $80$. Then, we reconstruct all the signals in the training set using each subset in the partition for various bandwidths and compute the error metric given in ~\eqref{eq:err_metric}. The plot of $Err$ versus bandwidth for the data corresponding to $\sigma = 1$ is shown in \cref{fig:bw_est}. The optimal bandwidth, $K_{\text{opt}}$, for reconstruction is identified as the bandwidth that minimizes the reconstruction error~\cite{jayawant23}. 
Furthermore, we observed the frequencies $\lambda$ obtained by solving the exact problem in ~\eqref{eq:samp_obj1} and its approximation in ~\eqref{eq:samp_obj3} for $|\Sc|=100$, as shown in~\cref{fig:exact-approx}. This shows that our approximation is reasonably accurate for solving~\eqref{eq:samp_obj1}.
    \begin{figure}
        \centering         
        \subfloat[\label{fig:bw_est}]{\includegraphics[width = 0.22 \textwidth, trim={0cm 0cm 0cm 0.5cm}, clip ]{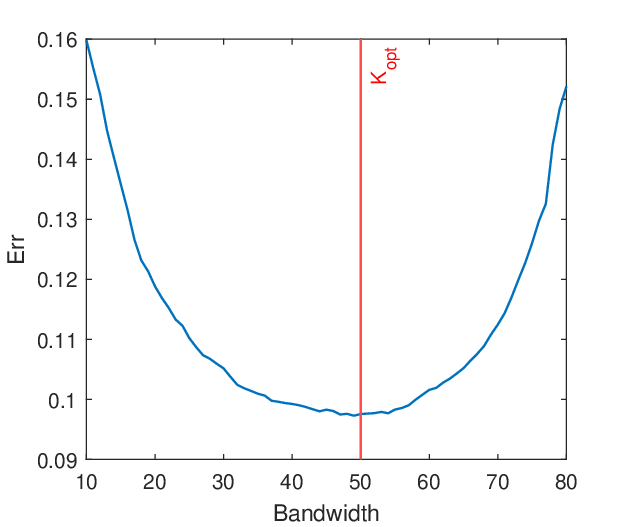}}
        \hfill
        \subfloat[\label{fig:exact-approx}]{\includegraphics[width = 0.22 \textwidth, trim={0cm 0cm 0cm 0.5cm}, clip ]{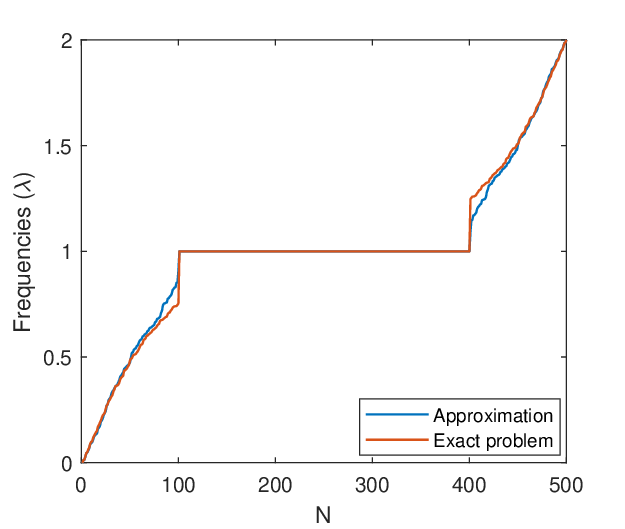}}
        \vskip -1.5ex
        \caption{ (a) The variation $Err$ for $(\Mm, \Id)$-GFT bandlimited reconstruction for different bandwidths. The red line indicates $K_{\text{opt}}$ (b) The frequencies $\lambda$ obtained by solving~\eqref{eq:samp_obj1} and its approximation in~\eqref{eq:samp_obj3}.}
        \vskip -1.5ex
    \end{figure}
    
We chose $p$ such that $p\leq N/K_{\text{opt}}$, to ensure that all subsets $\Sc_i$ can achieve near-optimal reconstruction with $(\Mm, \Id)$-GFT bandlimited reconstruction. We obtain the representative sampling partitions using ~\cref{alg:part_algo} and \cite[Algorithm 4]{Chakraborty23}, for different values of $\sigma$, and $p$. The $K_{\text{opt}}$ value obtained for $(\Mm, \Id)$-GFT bandlimited reconstruction for $\sigma=1$ and $\sigma=0.4$ are $50$ and $46$ respectively. 
The average SNR obtained over $M=500$ test signals for each case with these partitioning and reconstruction methods are shown in \cref{tab:rec_err}. From the table, we observe that for smooth signals with $p=10$, the proposed method outperforms the competing method by 2 dB. In a more challenging scenario, where the signal varies rapidly in space ($\sigma = 0.4$) and $p=10$, our method exceeds the competitor by approximately 4 dB. Furthermore, the small gain observed in the last column when using different partitions and SF $(\Mm, \Qm(\Sc_i))$-GFT reconstruction is due to the robustness of SF $(\Mm, \Qm(\Sc))$-GFT interpolation for arbitrary sampling sets~\cite{Paki24}.

\begin{table}[]
    \centering
    \renewcommand*{\arraystretch}{1.1}
    \begin{tabular}{c c c c c}    
        \hline \smallskip
        \multirow{2}{*}{$\sigma$} & \multirow{2}{*}{$p$} & \multirow{2}{*}{Method}  & $(\Mm, \Id)$-GFT & SF $(\Mm, \Qm(\Sc_i))$-GFT\\
        & & & $Err$ & $Err$ \\
        \hline \smallskip 
         \multirow{4}{*}{1} & \multirow{2}{*}{5} & \cite[Alg.4]{Chakraborty23} & $10.80$   & $11.75$ \\ 
         & & Proposed & $10.29$   & $\bs{11.97}$ \\ 
         \cline{2-5}\smallskip
         & \multirow{2}{*}{10} &  \cite[Alg.4]{Chakraborty23} & $7.75$   & $9.71$ \\ 
         & & Proposed & $7.02$   & $\bs{9.86}$ \\

         \hline \smallskip
         \multirow{4}{*}{0.4} & \multirow{2}{*}{5} & \cite[Alg.4]{Chakraborty23} & $4.81$   & $5.76$ \\ 
         & & Proposed & $4.69$   & $\bs{5.90}$ \\ \cline{2-5}\smallskip
         & \multirow{2}{*}{10} &  \cite[Alg.4]{Chakraborty23} & $1.51$   & $3.83$ \\ 
         & & Proposed & $-0.21$   & $\bs{3.99}$ \\
         \hline 
    \end{tabular}
    \vskip -1ex
    \caption{Comparision of average SNR (in dB); Method column indicates the partitioning algorithm used. The best performance for each $(\sigma,p)$ pair is in bold letters.}
    \label{tab:rec_err}
\vskip -2.5ex
\end{table}

\vspace{-1ex}
\section{CONCLUSION}
\vspace{-1ex}
\label{sec:conclusions}
In this work, we proposed a graph sampling method for interpolation using a sampling-adaptive GFT. We showed that the error in reconstruction with the sampling set adaptive interpolator depends only on the high-frequency energy of the signal. Using this, we proposed a sampling algorithm that minimizes the worst-case bandlimited model mismatch error with respect to the sampling set-adaptive GFT.  Our primary application involved partitioning sensor networks into multiple representative subsets, with each subset using its own sampling set-adapted GFT for interpolation. Experimental results showed that this approach outperforms traditional methods based on fixed subspace. Future work could focus on extending this partitioning algorithm for distributed implementation in large-scale sensor networks.

\bibliographystyle{IEEEbib}
\bibliography{BibGSP}
\vfill
\end{document}